\long\def\comment#1{}
\newtheorem{corollary}{Corollary}
\newtheorem{remark}{Remark}
\def\figref#1{Fig.~\ref{#1}}
\begin{document}

\title{Joint Optimization of Area Spectral Efficiency and Delay Over PPP Interfered Ad-hoc Networks}

\author{
   \IEEEauthorblockN{Young Jin Chun, Aymen Omri, and Mazen O. Hasna}
   \thanks{Y. J. Chun is with the Wireless Communications Laboratory, ECIT Institute, Queens University Belfast, United Kingdom
   (Email: ychun@ieee.org).}
   \thanks{A. Omri and M. O. Hasna are with Department of Electrical Engineering, Qatar University, Doha, Qatar (Email: {omriaymen,hasna}@qu.edu.qa).} 
   \thanks{This work was supported in part by the Engineering and Physical Sciences Research Council (EPSRC) under Grant References EP/H044191/1 and EP/L026074/1, and in part by the NPRP Grant 4-1119-2-427 from the Qatar National Research Fund (a member of Qatar Foundation). The statements made herein are solely the responsibility of the authors.}
 }

\maketitle
\begin{abstract}
To evaluate the performance of the co-channel transmission based communication, 
we propose a new metric for area spectral efficiency (ASE) of interference limited Ad-hoc network by assuming that the nodes are randomly distributed according to a Poisson point processes (PPP).
We introduce a utility function, $U = ASE/\text{delay}$ and derive the optimal ALOHA transmission probability $p$ and the SIR threshold $\tau$
that jointly maximize the ASE and minimize the local delay. Finally numerical results has been conducted to confirm that the joint optimization based on the $U$ metric achieves a significant performance gain compared to conventional systems.
\end{abstract}

\begin{IEEEkeywords}
Ad-hoc Network, Area Spectral Efficiency, Co-channel Interference, Mean Local Delay, Stochastic Geometry.
\end{IEEEkeywords}
\vspace{-0.1in}

\section{Introduction}
\IEEEPARstart{S}{}
upporting high-data-rate transmissions over limited radio spectrum with minimum amount of power consumption is the main goal of future wireless communications systems \cite{Ref12}, \cite{Ref32}, \cite{Ref_ext_002}. To reach higher spectral efficiencies, wireless network technologies need to collaborate and construct a seamless interconnection between multiple tiers, 
such as macro and small cells. Such interconnection between multiple tiers has been shown to improve spectral efficiency and coverage \cite{Ref32}. 
However, it increases the in-band interference, leading to a drastic degradation in the overall networks performance. 
Therefore, the optimal network design for future wireless communication systems should address the effect of interference within the area spectral efficiency metric. In addition, the local delay should be re-evaluated based on the interference model. To evaluate the performance of the new network design, a new area spectral efficiency metric is needed.
 
Most conventional performance metrics in wireless communication systems focus on the quantification of either link reliability or spectral efficiency \cite{RefA1, RefA2, RefH1, Ref22}. The link reliability is usually quantified in terms of outage probability or average error rate \cite{RefA1}, \cite{RefA2}, while the ergodic capacity addresses the spectral efficiency of wireless links, defined as the maximum achievable average spectrum efficiency \cite{RefH1}, \cite{Ref22}. 
However, those metrics are still short of taking into account the spatial effect of wireless transmissions.
In \cite{RW1}, a general area spectral efficiency (GASE) is defined to be the average data rate per unit bandwidth per unit area supported by a base station. In \cite{RW1}, the authors introduced the concept of affected area, based on which they evaluated the spectral and power efficiency of different transmission scenarios. As such, area spectral efficiency metrics are proposed to evaluate conventional wireless communication systems. These metrics, can not be used to evaluate the ASE in the presence of co-channel interference, where the interference problem has major impact on the system performance, and it is not included in the definition of the previous ASE metric.

In this paper, we propose a new metric for ASE of interference limited Ad-hoc networks 
by assuming that the nodes are randomly distributed according to a Poisson point process (PPP).
In particular:
\begin{itemize}
	\item The ASE of PPP interfered Ad-hoc networks is defined by using the concept of affected area
	and derived by using stochastic geometry. 	
  \item A utility function, $U = ASE/\text{delay}$, is introduced to jointly optimize the ASE and the delay,
  \item The optimal ALOHA transmission probability $p$ and the $SIR$ threshold $\tau$ that maximizes the utility $U$ is determined.
\item The significant performance gain acheived of the joint optimization based on $U$ when compared to conventional systems is numerically proved.
\end{itemize}
The remainder of this paper is organized as follows: Section II describes the system model. The system measure analysis is detailed in section III. Section IV presents the joint optimization of the ASE and the delay. The numerical results are presented and interpreted in Section V. Finally, conclusions are drawn in section VI.

\section{System Model}
We consider an ad-hoc network where multiple transmitters simultaneously transmit during each time slot and 
the spatial location of the interfering node is modeled as a PPP denoted as $\Phi$ with intensity $\lambda$. Specifically, we focus on a point-to-point link within the ad-hoc network 
where the destination node $d$ is located at the origin $d = (0,0)$, the source node is $s$, the interfering nodes are denoted by $x_n$ with node index $n \geq 0$, \textit{i.e.}, $x_n \in \Phi$. We let $s$, $d$, $x_n$ denote both the nodes and their coordinates. For MAC protocol, we use ALOHA with transmit probability $p$; each transmission occurs during one time slot where the nodes attempt to access a certain slot with probability $p$. If the transmission fails, the node attempts to re-transmit. The nodes transmit with power $P_t$, the distance between nodes $i$ and $j$ is denoted by $d_{ij} = ||i-j||$, and the path loss function between any two nodes is given by 
$||i-j||^{-\alpha} = d_{ij}^{-\alpha}$, where $\alpha > 2$ is the path loss exponent.
The channel coefficient between node $i$ and the destination $d$ is denoted by $h_{i}$.
We assume an independent and identically distributed (i.i.d.) Rayleigh fading variable with mean one, 
resulting in an exponential power distribution with mean one.
To simplify the analysis, we consider an interference limited environment 
where the interference is significantly larger than the noise, so that the noise will be ignored through out the analysis. 

\section{System Measures Analysis}
\subsection{Distribution of the SIR} 
Let $\Phi_k$ denote the set of active interfering nodes during time slot $k$, \textit{i.e.}, $\Phi_k \in \Phi$.
The aggregate interference at the destination $d$ in time slot $k$ is given by
\begin{equation}
	\begin{split}
	I_k = P_t \sum_{x \in \Phi \backslash \{s\}} h_{x} ||x||^{-\alpha} \textbf{1}\left(x \in \Phi_k\right),
	\end{split}
	\label{eq_01}
\end{equation}
where $\textbf{1}(\cdot)$ is the indicator function and the intended signal from $s$ is excluded from (\ref{eq_01}). 
Then, the signal-to-interference ratio (SIR), denoted by $\Gamma_k$, is
$\Gamma_k = {P_t h_{s} d_{sd}^{-\alpha}}/{I_k} = {h_{s} d_{sd}^{-\alpha}}/{I_k^{'}}$,
where $I_k^{'} \triangleq I_k/P_t$ for notation simplicity. 

Let $\mathcal{C}_{\Phi}$ denote the successful transmission event conditioned on the PPP $\Phi$. Given the SIR threshold $\tau$, 
the probability of successful transmission conditioned on $\Phi$ can be evaluated as  
\begin{equation}
	\begin{split}
	\mathbb{P}\left( \mathcal{C}_{\Phi} \right) &= p \mathbb{P}\left( \Gamma_k > \tau | {\Phi} \right)
	= p \mathbb{P}\left[ h_{s} >  \tau d_{sd}^{\alpha} I_k^{'} | {\Phi} \right]\\
	&= p \mathbb{E}\left[ 
	\prod_{x \in \Phi \backslash \{s\}}
	\mathrm{e}^{ - \tau d_{sd}^{\alpha} h_{x} ||x||^{-\alpha} \textbf{1}\left(x \in \Phi_k\right) }
	\right]\\
	&= p \prod_{x \in \Phi \backslash \{s\}}
	\left[ 
	\frac{p}{1+ \tau d_{sd}^{\alpha} ||x||^{-\alpha}} + 1-p
	\right],
	 \end{split}
	\label{eq_03}
\end{equation}
where the first equality follows because a transmission occurs with probability $p$,
the third equality is derived using the distribution of 
$h_{k, s}$, \textit{i.e.}, $\mathbb{P}(h_{s} > x) = \exp(-x)$, 
and the expectation in the last equality is taken with regard to the Rayleigh faded coefficient $h_{x}$ and 
the ALOHA protocol $\Phi_k$.

The distribution of SIR can be evaluated by averaging $\mathbb{P}\left( \Gamma_k > \tau | {\Phi} \right)$ over all possible $\Phi$ as follows.
\begin{equation}
	\begin{split}
	&\mathbb{P}\left( \Gamma_k > \tau \right) = \mathbb{E}_{\Phi}\left[ \mathbb{P}\left( \Gamma_k > \tau | {\Phi} \right) \right]\\
	&= \exp\left(
	-\lambda p \int_{R^2} \left[ 
	1 - \frac{1}{1+ \tau d_{sd}^{\alpha} ||x||^{-\alpha}}
	\right] \mathrm{d}x
	\right) \\
	&= \exp\left(
	-2 \pi \lambda p \int_{\rho = 0}^{\infty}  \frac{\rho}{\left( \tau d_{sd}^{\alpha} \right)^{-1} \rho^{\alpha} + 1}
	 \mathrm{d}\rho
	\right), 
	 \end{split}
	\label{eq_04}
\end{equation}
where we applied the probability generating functional (PGFL) of the PPP \cite{ref-02} in the second step and 
changed the Cartesian coordinates to Polar coordinates in the third step.
By applying the integral relation in (\ref{eq_04}),
	\begin{equation}
	\begin{split}
\int_{0}^{\infty} \frac{x^{\mu - 1}}{1 + q x^{\nu}}\mathrm{d}x = \frac{1}{\mu} q^{-\frac{\mu}{\nu}} C\left(\frac{\mu}{\nu}\right),
\quad C(\delta) = \frac{1}{\mathrm{sinc}(\delta)},
		\end{split}
		\label{neo.eq_1}
	\end{equation}
we can simplify the distribution of SIR as follows
\begin{equation}
	\begin{split}
	\mathbb{P}\left( \Gamma_k > \tau \right) = \exp\left(
	-\lambda \pi d_{sd}^{2} C(\delta) p \tau^{\delta}
	\right),\quad \delta = \frac{2}{\alpha}.
	 \end{split}
	\label{neo.eq_2}
\end{equation}

\subsection{Area Spectral Efficiency} 

In this subsection, we evaluate the ASE of a PPP interfered ad-hoc network using the definition in \cite{RW1} as follows
\begin{equation}
	\begin{split}
	A_e &\triangleq  \frac{C}{\Lambda},
	 \end{split}
	\label{neo.eq_3}
\end{equation}
where $A_e$, $C$, and $\Lambda$ indicate the ASE, the transmission capacity, and the affected area, respectively.
First, the capacity is evaluated as
\begin{equation}
	\begin{split}
	C &= p \mathbb{E}\left[ \ln(1 + \Gamma)\right]  = p \int_{0}^{\infty} \mathbb{P}\left[ \ln(1 + \Gamma) > t \right] \mathrm{d}t \\
	&= p \int_{0}^{\infty} \mathbb{P}\left[ \Gamma > \mathrm{e}^t-1 \right] \mathrm{d}t\\
	&= p \int_{0}^{\infty} \exp\left[ - A^{'} p \left( \mathrm{e}^t - 1 \right)^{\delta} \right] \mathrm{d}t,
	 \end{split}
	\label{neo.eq_4}
\end{equation}
where the last step follows from (\ref{neo.eq_2}), and $A^{'} \triangleq \lambda \pi d_{sd}^{2} C(\delta)$.

Next, we calculate the affected area by finding the spatial range $d_0$ that 
achieves a minimum link success probability $p_s$
\begin{equation}
	\begin{split}
	\underset{d_{sd} \leq d_0}{\text{Find}} ~ d_{0} ~ \text{such that}~ \mathbb{P}\left( \Gamma \geq \tau \right) \geq p_s.
	 \end{split}
	\label{neo.eq_5}
\end{equation}
Then, the maximum range is given by
\begin{equation} 
\begin{split}
	&\mathbb{P}\left( \Gamma \geq \tau \right) \geq p_s
	\Leftrightarrow ~d_{sd}^{2} \leq -\frac{\ln(p_s)}{\lambda \pi C(\delta) p \tau^{\delta}}\\
	\Rightarrow & d_0 \triangleq \sqrt{\frac{|\ln(p_s)|}{\lambda \pi C(\delta) p \tau^{\delta}}},  ~\quad 0 \leq p_s \leq 1,
\end{split}
	\label{eq_08}
\end{equation}
and the affected area is given as follows
\begin{equation}
\begin{split}
	\Lambda & = \pi d_0^2 = \frac{|\ln(p_s)|}{\lambda C(\delta) p \tau^{\delta}}.
\end{split}
	\label{eq_09}
\end{equation}

Therefore, the ASE of a PPP interfered ad-hoc networks is
\begin{equation}
	\begin{split}
	A_e = \frac{\lambda C(\delta) p^2 \tau^{\delta}}{|\ln(p_s)|} \int_{0}^{\infty} \exp\left[ - A^{'} p \left( \mathrm{e}^t - 1 \right)^{\delta} \right] \mathrm{d}t,
	 \end{split}
	\label{neo.eq_6}
\end{equation}
where $A^{'} = \lambda \pi d_{sd}^{2} C(\delta)$, $C(\delta) = \frac{1}{\mathrm{sinc}(\delta)}$, and $\delta = \frac{2}{\alpha}$.

\subsection{Mean Local Delay}
Each node attempts to re-transmit if the transmission during the previous time slot failed. 
In general, the success transmission events in different time slots are dependent due to interference correlation. However, if we consider the conditional success event for a given $\Phi$, the randomness originates only from the channel fading coefficient and the ALOHA protocol that are independent for each time slot. Therefore, the success event in different time slots given $\Phi$ are independent with probability $P\left( \mathcal{C}_{\Phi} \right)$ in (\ref{eq_03})  
and the local delay given $\Phi$, which is denoted as $\Delta_{\Phi}$ and defined as the number of time slots required till a successful transmission occurs, 
is a geometric random variable as given below
\begin{equation}
	\begin{split}
		\mathbb{P}\left( \Delta_{\Phi} = k \right) = \left( 1 - \mathbb{P}\left( \mathcal{C}_{\Phi} \right) \right)^{k-1} 
		\mathbb{P}\left( \mathcal{C}_{\Phi} \right).
	 \end{split}
	\label{neo.eq_7}
\end{equation}
The mean local delay $D(p)$ for every possible $\Phi$ is given by
\begin{equation}
	\begin{split}
	D(p) &\triangleq \mathbb{E}_{\Phi}\left[ \mathbb{E}\left( \Delta_{\Phi} \right) \right] = 
	\mathbb{E}_{\Phi}\left[ \frac{1}{\mathbb{P}\left( \mathcal{C}_{\Phi} \right)} \right]\\
	&= \frac{1}{p} \exp\left( \frac{A^{'} p \tau^{\delta}}{( 1-p)^{1-\delta}} \right),
	 \end{split}
	\label{neo.eq_8}
\end{equation}
where the second and third equality follows from using the mean of geometric random variable and (\ref{eq_03}), respectively. 
Detailed proof of (\ref{neo.eq_8}) is provided in \cite{MAC-Ref01} and \cite{Ref_ext_001}.§

\section{Joint Optimization of the ASE and Delay}
Let us define a utility function $U \triangleq \frac{A_e}{D(p)}$ for PPP interfered ad-hoc networks. By maximizing $U$, we can jointly maximize the area spectral efficiency and minimize the local delay. First, we determine the optimal SIR threshold $\tau$ that maximizes $U$ for a given transmission probability $p$ as follows
\begin{equation}
	\begin{split}
	\underset{\tau \geq 0}{\text{Find}} ~ \tau^{\ast} ~ \text{that maximizes}~ U, ~\textit{i.e.,~} \frac{\partial U}{\partial \tau} = 0.
	 \end{split}
	\label{neo.eq_10}
\end{equation}
Then, we find the optimal ALOHA transmission probability $p$ that maximizes $U$ for a given SIR threshold as follows
\begin{equation}
	\begin{split}
	\underset{0 \leq p \leq 1}{\text{Find}} ~ p^{\ast} ~ \text{that maximizes}~ U, ~\textit{i.e.,~} \frac{\partial U}{\partial p} = 0.
	 \end{split}
	\label{neo.eq_9}
\end{equation}

\begin{corollary}
The optimal SIR threshold $\tau^{\ast}$ that maximizes the utility $U$ in (\ref{neo.eq_10}) for a given $p$ is 
\begin{equation}
	\begin{split}
	\tau^{\ast} = q (1-p)^{\frac{1-\delta}{\delta}}, 
    \quad q = \left( {A^{'} p}\right)^{-1/\delta}.
	 \end{split}
	\label{neo.eq_12}
\end{equation}
The optimal probability $p^{\ast}$ that maximizes $U$ for a given threshold is the solution of the following condition 
  \begin{equation}
    \begin{split}
    	\left[\frac{3}{A^{'} p^{\ast}} - \frac{\tau^{\delta} \left( 1 - p^{\ast} \delta \right)}{\left( 1 - p^{\ast}\right)^{2-\delta}} \right] 
        = \frac{\psi_1(p^{\ast})}{\psi_0(p^{\ast})},
    \end{split}
	    \label{gene.neoeq_03}
  \end{equation}
where $\psi_n(p)$ denotes the integral equation
  \begin{equation}
	\psi_n(p) \triangleq \int_{0}^{\infty} \left( \mathrm{e}^t-1\right)^{n \delta} 
    \exp\left( - A^{'} p \left( \mathrm{e}^t-1\right)^{\delta}  \right) \mathrm{d}t,
	\label{gene.neoeq_04}
  \end{equation}
  for non-negative integer $n = 0, 1, 2, \cdots$.
\end{corollary}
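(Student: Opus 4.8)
The plan is to substitute the closed forms for $A_e$ from (\ref{neo.eq_6}) and for $D(p)$ from (\ref{neo.eq_8}) into the definition $U \triangleq A_e/D(p)$. After absorbing the $\tau$- and $p$-free factors into one constant this yields
\begin{equation}
U = \frac{\lambda C(\delta)}{|\ln(p_s)|}\, p^{3}\, \tau^{\delta}\, \psi_0(p)\, \exp\!\left( -\frac{A^{'} p\, \tau^{\delta}}{(1-p)^{1-\delta}} \right),
\end{equation}
with $\psi_0(p)$ as in (\ref{gene.neoeq_04}). Since $U>0$, the conditions $\partial U/\partial\tau=0$ and $\partial U/\partial p=0$ are equivalent to $\partial(\ln U)/\partial\tau=0$ and $\partial(\ln U)/\partial p=0$, and passing to $\ln U$ turns the product above into a sum that is convenient to differentiate.

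For the optimal threshold I would fix $p$ and note that the only $\tau$-dependent terms of $\ln U$ are $\delta\ln\tau - A^{'}p\,\tau^{\delta}/(1-p)^{1-\delta}$. With the substitution $u=\tau^{\delta}$ and the abbreviation $c=A^{'}p/(1-p)^{1-\delta}$ these become exactly $\ln u - cu$; since $\tau\mapsto\tau^{\delta}$ is increasing on $(0,\infty)$, maximizing over $\tau$ amounts to maximizing $\ln u - cu$ over $u>0$, whose derivative $1/u - c$ vanishes only at $u=1/c$, a strict global maximum by concavity. Undoing the substitution gives $\tau^{\delta}=(1-p)^{1-\delta}/(A^{'}p)$, i.e. $\tau^{\ast}=(A^{'}p)^{-1/\delta}(1-p)^{(1-\delta)/\delta}=q\,(1-p)^{(1-\delta)/\delta}$, which is (\ref{neo.eq_12}).

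For the optimal probability I would fix $\tau$ and differentiate
\begin{equation}
\ln U = \mathrm{const} + 3\ln p + \ln\psi_0(p) - \frac{A^{'} p\, \tau^{\delta}}{(1-p)^{1-\delta}}
\end{equation}
with respect to $p$. The term $3\ln p$ contributes $3/p$. Differentiating $\psi_0$ under the integral sign gives $\psi_0'(p) = -A^{'}\psi_1(p)$ with $\psi_1$ as in (\ref{gene.neoeq_04}), so $\ln\psi_0$ contributes $-A^{'}\psi_1(p)/\psi_0(p)$. A one-line product-rule computation gives $\frac{d}{dp}\big[p(1-p)^{\delta-1}\big]=(1-p\delta)(1-p)^{\delta-2}$, so the last term contributes $-A^{'}\tau^{\delta}(1-p\delta)/(1-p)^{2-\delta}$. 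Setting the sum of these three contributions to zero at $p=p^{\ast}$ and dividing by $A^{'}$ rearranges immediately into (\ref{gene.neoeq_03}).

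The calculation is elementary; the only step deserving a word of justification is the interchange of $d/dp$ and $\int_0^{\infty}$ in $\psi_0$, which is legitimate because on every compact subinterval of $(0,1)$ the integrand $\exp(-A^{'}p(e^t-1)^{\delta})$ and its $p$-derivative are dominated by a fixed integrable function of $t$ (both decay like $\exp(-\mathrm{const}\cdot e^{\delta t})$ for large $t$). I would also emphasize that (\ref{gene.neoeq_03}) is merely the first-order stationarity equation, and add a brief endpoint analysis — using $\delta<1$ — showing $U\to 0$ as $p\to 0^{+}$ and as $p\to 1^{-}$, so that $U$ does attain an interior maximum on $(0,1)$ at a point satisfying (\ref{gene.neoeq_03}).
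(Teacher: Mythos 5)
Your derivation is correct, and the core computation coincides with the paper's: substitute (\ref{neo.eq_6}) and (\ref{neo.eq_8}) into $U$ to obtain exactly the paper's expression (\ref{eq.twi_001}), then impose the first-order conditions in $\tau$ and $p$, using $\partial \psi_0/\partial p = -A^{'}\psi_1$ to arrive at (\ref{neo.eq_12}) and (\ref{gene.neoeq_03}); your logarithmic differentiation reproduces the paper's derivatives (\ref{eq.cor_001}) and (\ref{gene.eq_36}) up to the positive factor $U$. Where you genuinely differ is in how maximality is certified. The paper computes the second derivative in $\tau$ at $\tau^{\ast}$ and shows it equals $-\delta^2 U/(\tau^{\ast})^2<0$, and then merely asserts that the analogous second-derivative computation in $p$ is negative at $p^{\ast}$. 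Your route — the substitution $u=\tau^{\delta}$ turning the $\tau$-dependence of $\ln U$ into the strictly concave $\ln u - cu$ — yields a \emph{global} maximum in $\tau$ rather than a local one, and your endpoint analysis in $p$ (with $U\to 0$ as $p\to 1^{-}$ since $\delta<1$, and as $p\to 0^{+}$ since $\psi_0(p)$ diverges only logarithmically while $p^3\to 0$) establishes existence of an interior maximizer satisfying (\ref{gene.neoeq_03}) without relying on the paper's unproved second-order claim; for the $p\to 0^{+}$ limit you should spell out the logarithmic growth of $\psi_0(p)$, which follows from comparing the integrand with $\exp(-A^{'}p\,\mathrm{e}^{\delta t})$ for large $t$. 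One caveat your argument shares with the paper's: stationarity plus boundary decay guarantees that \emph{some} maximizer of $U$ in $p$ satisfies (\ref{gene.neoeq_03}), but neither argument shows the stationary point is unique, so strictly speaking (\ref{gene.neoeq_03}) characterizes candidates for $p^{\ast}$ rather than a single solution — a limitation the paper papers over with its asserted concavity.
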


\begin{proof}
By using (\ref{neo.eq_6}), (\ref{neo.eq_8}) and (\ref{gene.neoeq_04}), 
the utility $U$ is given as 
   \begin{equation}
	U \triangleq \frac{A_e}{D(p)} 
    = \frac{\lambda C(\delta) p^3 \tau^{\delta}}{|\ln(p_s)|} 
    \exp\left( -\frac{A^{'} p \tau^{\delta}}{( 1-p)^{1-\delta}} \right)
    \psi_0(p).
	    \label{eq.twi_001}
  \end{equation}
The first derivative of $U$ with respect to $\tau$ is obtained as 
   \begin{equation}
    \begin{split}
    \frac{\partial U}{\partial \tau} &= \left( \frac{1}{\tau} -\frac{A^{'} p}{(1-p)^{1-\delta}} \tau^{\delta-1} \right) \delta U.
    \end{split}
	    \label{eq.cor_001}
  \end{equation}
Since $U$ has non-negative value, the optimal $\tau^{\ast}$ achieves (\ref{neo.eq_12}), 
where we substituted $1/q^{\delta} = A^{'} p$ in (\ref{eq.cor_001}). 
The second derivative of $U$ with respect to $\tau$ at $\tau^{\ast}$ is 
   \begin{equation}
    \begin{split}
	\frac{\partial^2 U}{\partial \tau^2}\bigg|_{\tau^{\ast}} &= 
	\left( \frac{1}{\tau} -\frac{A^{'} p}{(1-p)^{1-\delta}} \tau^{\delta-1} \right) \delta \frac{\partial U}{\partial \tau}\\
	&- 
	\left( \frac{1}{\tau^2} + \frac{A^{'} p (\delta-1) }{(1-p)^{1-\delta}} \tau^{\delta-2} \right) \delta U \bigg|_{\tau^{\ast}}\\
	&= -\frac{\delta U}{(\tau^{\ast})^2} \left( 1 -\frac{A^{'} p (1-\delta)}{(1-p)^{1-\delta}} (\tau^{\ast})^{\delta} 
	\right)\bigg|_{\tau^{\ast}}\\
	&= -\frac{\delta U}{(\tau^{\ast})^2} \left( 1 - (1-\delta)\right) = -\frac{\delta^2 U}{(\tau^{\ast})^2},
    \end{split}
	    \label{eq.cor_002}
  \end{equation}
  where we applied $\left. \frac{\partial U}{\partial \tau} \right\vert_{\tau^{\ast}} = 0$ in the second equality and 
  used (\ref{neo.eq_12}) in the last equality. Since the second derivative of $U$ is negative at $\tau^{\ast}$, 
  the SIR threshold (\ref{neo.eq_12}) maximizes the utility. 

	The first derivative of $U$ with respect to $p$ is obtained as
	   \begin{equation}
	    \begin{split}
	    \frac{\partial U}{\partial p} &= \frac{\lambda C(\delta) p^3 \tau^{\delta} A^{'}}{|\ln(p_s)|} 
	    \exp\left( -\frac{A^{'} p \tau^{\delta}}{( 1-p)^{1-\delta}} \right)\\
	    &\times 
	    \left[ \left(\frac{3}{ A^{'} p} - \frac{\tau^{\delta} \left( 1 - p \delta \right)}{\left( 1 - p\right)^{2-\delta}} \right)  
	    \psi_0(p) - \psi_1(p)
	    \right],
	    \end{split}
		    \label{gene.eq_36}
	  \end{equation}
	  where we used (\ref{gene.neoeq_04}), (\ref{eq.twi_001}), and $\frac{\partial \psi_n(p)}{\partial p} = -A^{'} \psi_{n+1}(p)$.
	 Since $\psi_n(p)$ is an integral of non-negative valued function, $\psi_n(p)$ is non-negative and 
	 the optimal $p^{\ast}$ achieves (\ref{gene.neoeq_03}).
	 The second derivative of $U$ with respect to $p$ has negative value at $p^{\ast}$, 
	 which can be proved by using the similar approach as (\ref{eq.cor_002}). Hence, 
	 the transmission probability $p^{\ast}$ maximizes the utility $U$ and this completes the proof. 
\end{proof}

\begin{remark}
The integral $\psi_n(p)$ can be numerically computed by applying change of variable on (\ref{gene.neoeq_04}) three times as follows; 
$1/q^{\delta} = A^{'} p$, $\frac{\mathrm{e}^t-1}{q} \rightarrow x$, and $x^{\delta} \rightarrow y$. 
  \begin{equation}
  	\begin{split}
	\psi_n(p) &= q^{n\delta + 1} \int_{0}^{\infty} \frac{x^{n\delta} \mathrm{e}^{-x^{\delta}}}{1+qx} \mathrm{d}t\\ 
	&= \frac{q^{n\delta + 1}}{\delta} \int_{0}^{\infty} \frac{y^{n+\frac{1}{\delta}-1} \mathrm{e}^{-y}}{1+q y^{\frac{1}{\delta}}} \mathrm{d}t\\ 
	&= \frac{q^{n\delta + 1}}{\delta} G\left( n + \frac{1}{\delta}, \frac{1}{\delta}, q \right),
    \end{split}
	\label{eq.twi_002}
  \end{equation}
where we denote the following integral as $G\left( \mu, \nu, q \right)$  
\begin{equation}
	\begin{split}
	&G\left( \mu, \nu, q \right) \triangleq \int_{0}^{\infty} \frac{t^{\mu - 1} \mathrm{e}^{-t}}{1 + q t^{\nu}}\mathrm{d}t\\
	&= \sum_{n = 0}^{\infty} \frac{(-1)^n}{n! (n+\mu-1)} q^{-\frac{n+\mu-1}{\nu}} C\left( \frac{n+\mu-1}{\nu} \right).
	\end{split}
	\label{neo2.eq_2}
\end{equation}
We replaced $\mathrm{e}^{-x}$ in (\ref{neo2.eq_2}) with its Taylor series form $\mathrm{e}^{-x} = \sum_{n=0}^{\infty}\frac{(-1)^n x^n}{n!}$ 
and applied (\ref{neo.eq_1}) in (\ref{neo2.eq_2}).
\end{remark}

\section{Numerical Results}
In this section, we evaluate and compare the newly developed metrics; $A_e$, $D(p)$, and $U = {A_e}/{D(p)}$.
\figref{fig.num1} compares the utility $U$ versus node density $\lambda$ for several transmission probabilities $p$.
We used (\ref{neo.eq_6}) and (\ref{neo.eq_8}) for the numerical computations, and 
assumed $d_{sd} = 1$, $\alpha = 4$, $\tau = 1$, and $p_s = 0.01$ for the network parameters. For low density $\lambda$, large $p$ guarantees higher $U$, whereas, for larger $\lambda$, 
the optimal $p$ depends on the node density itself; for example, $p^{\ast} = 0.5$ for $\lambda = 0.35$.
\figref{fig.num2} compares the mean local delay $D(p)$ versus transmission probability $p$ for different node densities $\lambda$.
We observe that the optimal $p^{\ast}$ that minimizes $D(p)$ tends to decrease as the node density increases.
As the node density $\lambda$ increases, the aggregate interference at the destination generally increases. 
In order to minimize the local delay or maximize the utility, 
we need to counteract the high interference by using lower transmission probability $p$.

\figref{fig.num3} compares the ASE $A_e$ versus the mean local delay $D(p)$ while changing the node density from $\lambda = 10^{-5}$ 
(point on the lower left) to $\lambda = 10^{-1}$ (point on the upper right). 
The two dotted lines in the bottom present the conventional ALOHA system with fixed transmission probabilities for the whole range of $\lambda$, 
\textit{i.e.}, $p = 0.6$ and $p = 0.4$ for the unmarked and marked curve, respectively. 
For each $\lambda$ value, we calculated the corresponding ASE and mean local delay on a fixed $p$ by using (\ref{neo.eq_6}) and (\ref{neo.eq_7}). 
The conventional system is used as a performance benchmark for the proposed adaptive ALOHA protocol that jointly optimize ASE and delay 
for a given network parameters. For the solid line, we used numerical search to find the optimal transmission probability $p^{\ast}$ 
that satisfies (\ref{gene.neoeq_03}) on each $\lambda$ value. Based on this optimal combination $(p^{\ast}, \lambda)$, we evaluated 
the corresponding ASE and $D(p)$ by using (\ref{neo.eq_6}) and (\ref{neo.eq_7}). 
Hence, the main difference between the solid and dotted curve is that the former adaptively select the optimal $p^{\ast}$ for each $\lambda$ value, 
whereas the latter use a fixed $p$ regardless of $\lambda$. We note that the solid line with optimal $p^{\ast}$ achieves significant performance gain over the fixed $p$ case, both in terms of ASE and delay. For a fixed $A_e = 0.02$, the solid line achieves $34.7\%$ gain in terms of the mean local delay, whereas, for a given delay $D(p) = 1.8$, the solid line obtains $182\%$ gain in terms of the area spectral efficiency compared to the $p = 0.6$ case.

\section{Conclusion}
In this paper, a new metric for ASE of PPP interfered Ad-hoc
networks has been introduced to evaluate the performance of the co-channel transmission based communication systems. To jointly optimize the ASE and the mean local delay, 
we introduced a utility function, $U = A_e/D(p)$. Based on this utility $U$, the optimal ALOHA transmission probability $p$ and
the SIR threshold $\tau$ have been determined. Finally, simulation results confirmed that the joint optimization based on the $U$ metric achieves a significant performance 
gain compared to conventional communication systems.

 \begin{figure}[!t]
    \centering
        \includegraphics[width=0.85\linewidth, height=0.6\linewidth]{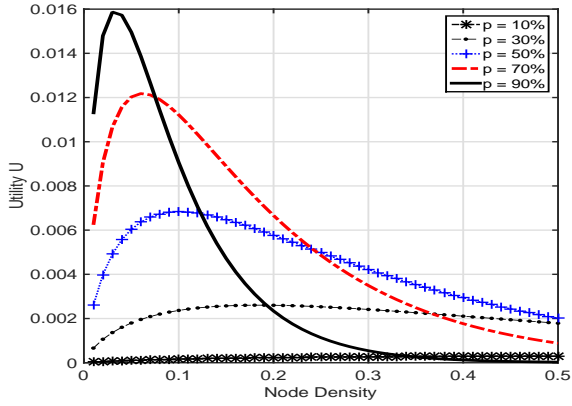}
    \caption{Utility $U$ versus node density $\lambda$ for several transmission probability $p$.}
		\label{fig.num1}
\end{figure}

 \begin{figure}[!t]
    \centering
        \includegraphics[width=0.85\linewidth, height=0.6\linewidth]{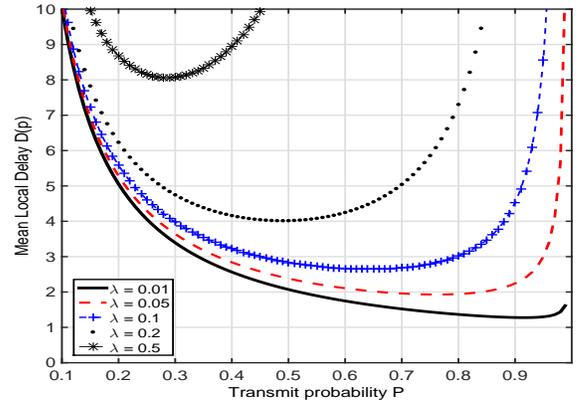}
    \caption{Mean local delay $D(p)$ versus transmission probability $p$ for different node density $\lambda$.}
		\label{fig.num2}
\end{figure}

 \begin{figure}[!t]
    \centering
        \includegraphics[width=0.85\linewidth, height=0.6\linewidth]{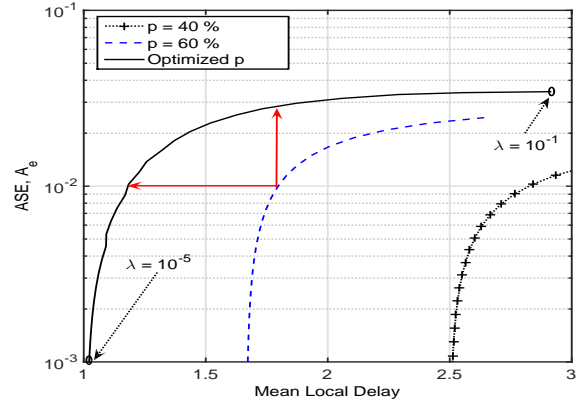}
    \caption{ASE $A_e$ versus the mean local delay $D(p)$ for node density between $10^{-5} \leq \lambda \leq 10^{-1}$.}
		\label{fig.num3}
\end{figure}

{\small \vspace{-0.1in}
\bibliographystyle{IEEEtran}
\bibliography{ref_1430}}

\begin{thebibliography}{10}
\providecommand{\url}[1]{#1}
\csname url@samestyle\endcsname
\providecommand{\newblock}{\relax}
\providecommand{\bibinfo}[2]{#2}
\providecommand{\BIBentrySTDinterwordspacing}{\spaceskip=0pt\relax}
\providecommand{\BIBentryALTinterwordstretchfactor}{4}
\providecommand{\BIBentryALTinterwordspacing}{\spaceskip=\fontdimen2\font plus
\BIBentryALTinterwordstretchfactor\fontdimen3\font minus
  \fontdimen4\font\relax}
\providecommand{\BIBforeignlanguage}[2]{{%
\expandafter\ifx\csname l@#1\endcsname\relax
\typeout{** WARNING: IEEEtran.bst: No hyphenation pattern has been}%
\typeout{** loaded for the language `#1'. Using the pattern for}%
\typeout{** the default language instead.}%
\else
\language=\csname l@#1\endcsname
\fi
#2}}
\providecommand{\BIBdecl}{\relax}
\BIBdecl

\bibitem{Ref12}
Q.~Li, R.~Q. Huand, Y.~Qian, and G.~Wu, ``{C}ooperative {C}ommunications for
  {W}ireless {N}etworks, {T}echniques and {A}pplications in {LTE} {A}dvanced
  {S}ystems,'' \emph{IEEE Wireless Commun.}, vol.~19, no.~2, pp. 22--29, 2012.

\bibitem{Ref32}
A.~Ghosh, N.~Mangalvedhe, R.~Ratasuk, B.~Mondal, M.~Cudak, E.Visotsky, T.~A.
  Thomas, J.~G. Andrews, P.~Xia, H.~S. Jo, H.~S. Dhillon, and T.~D. Novlan,
  ``{H}eterogeneous {C}ellular {N}etworks: {F}rom {T}heory to {P}ractice,''
  \emph{IEEE Commun. Mag.}, vol.~50, no.~6, pp. 54--64, 2012.

\bibitem{Ref_ext_002}
Y.~Chun, M.~Hasna, and A.~Ghrayeb, ``Modeling heterogeneous cellular networks
  interference using poisson cluster processes,'' \emph{Accepted for
  publication, IEEE J. Sel. Areas Commun.}, 2015.

\bibitem{RefA1}
A.~Omri and M.~O. Hasna, ``{N}ovel {C}ooperative {C}ommunication {S}chemes with
  {I}nterference {M}anagement for {M}ulti-{U}ser {W}ireless {N}etworks.'' in
  \emph{IEEE International Conference on {C}ommunications (ICC)}, 2013.

\bibitem{RefA2}
------, ``{P}erformance {A}nalysis of {OFDMA} {B}ased {W}ireless {C}ooperative
  {N}etworks with {I}nterference {M}anagement.'' \emph{77th Vehicular
  Technology Conference (VTC)-Spring}, 2013.

\bibitem{RefH1}
H.~Chamkhia and M.~O. Hasna, ``{S}ub-{C}hannels {S}election {S}chemes with
  {I}nterference {M}anagement for {U}nderlay {C}ognitive {N}etworks.''
  \emph{IEEE International Symposium on Personal Indoor and Mobile Radio
  Communications (PIMRC)}, 2013.

\bibitem{Ref22}
T.~C. Clancy, ``{A}chievable {C}apacity {U}nder the {I}nterference
  {T}emperature {M}odel,'' \emph{IEEE International Conference on Computer
  Communications (INFOCOM)}, pp. 794--802, 2007.

\bibitem{RW1}
L.~Zhang, H.~C. Yang, and M.~O. Hasna, ``{G}eneralized {A}rea {S}pectral
  {E}fficiency: An {E}ffective {P}erformance {M}etric for {G}reen {W}ireless
  {C}ommunications,'' \emph{IEEE Trans. Commun.}, vol.~99, pp. 1--11, 2014.

\bibitem{ref-02}
M.~Haenggi, ``{S}tochastic {G}eometry for {W}ireless {N}etworks,''
  \emph{Cambridge University Press}, 2012.

\bibitem{MAC-Ref01}
Y.~Zhong, W.~Zhang, and M.~Haenggi, ``{M}anaging {I}nterference {C}orrelation
  {T}hrough {R}andom {M}edium {A}ccess,'' \emph{IEEE Trans. Wireless Commun.},
  vol.~13, no.~2, pp. 928--941, 2014.

\bibitem{Ref_ext_001}
U.~Schilcher, C.~Bettstetter, and G.~Brandner, ``Temporal correlation of
  interference in wireless networks with rayleigh block fading,'' \emph{IEEE
  Trans. on Mobile Computing}, vol.~11, no.~12, pp. 2109--2120, 2012.

\end{thebibliography}

\end{document}